\newlength\imagewidth
\definecolor{dgreen}{rgb}{0,.6,0}
\journalname{Nonlinear Dynamics}
\begin{document}

\title{Breaking a novel image encryption scheme based on improved hyperchaotic sequences}

\author{Chengqing Li \and Yuansheng Liu \and Tao Xie \and Michael Z. Q. Chen}

\authorrunning{C. Li et al.}

\institute{
Chengqing Li, Yuansheng Liu, Tao Xie\at
              MOE Key Laboratory of Intelligent Computing and Information Processing,
              College of Information Engineering,
              Xiangtan University, Xiangtan 411105, Hunan, China \\
              Tel.: +86-731-52639779\\
              Fax: +86-731-58292217\\
              \email{DrChengqingLi@gmail.com}, \url{http://orcid.org/0000-0002-5385-7644}\\
Michael Z. Q. Chen \at
Department of Mechanical Engineering,\\
The University of Hong Kong, Hong Kong
}

\date{Received: Apr 16, 2013}

\maketitle

\begin{abstract}
Recently, a novel image encryption scheme based on improved hyperchaotic sequences was proposed. A pseudo-random number sequence, generated
by a hyper-chaos system, is used to determine two involved encryption functions, bitwise exclusive or (XOR) operation and modulo addition. It was reported
that the scheme can be broken with some pairs of chosen plain-images and the corresponding cipherimages. This paper re-evaluates the security of the encryption
scheme and finds that the encryption scheme can be broken with only one known plain-image. The performance of the known-plaintext attack, in terms
of success probability and computation load, become even much better when two known plain-images are available. In addition, security defects on insensitivity
of the encryption result with respect to changes of secret key and plain-image are also reported.

\keywords{chaos \and image encryption \and cryptanalysis \and known-plaintext attack}

\end{abstract}

\section{Introduction}

The popularization of image capture devices and fast improvement of transmission speed over all kinds
of networks makes security of images become more and more important. However, the traditional text encryption
techniques cannot protect images efficiently due to the big difference between images and texts. The subtle similarities between chaos and cryptography,
e.g. sensitivity to initial conditions/control parameter of a chaotic system is very similar to diffusion with a small change in the 	
plaintext/secret key of a cryptography system, attract researchers to consider chaos as a novel way to design secure and efficient encryption schemes \cite{YaobinMao:CSF2004,Ye:Scramble:PRL10,ChenJY:Joint:TCSII11}. Meanwhile,
some cryptanalysis work demonstrated that some chaos-based encryption schemes are insecure against various
conventional attacks to different extents from the viewpoint of modern cryptology \cite{LiShujun:YTSCipher:IEEETCASII2004,Xiao:ImproveTable:TCASII06,
David:AttackingShuffling:CSF09,Li:AttackingIVC2009,SolakErcan:AnaFridrich:BAC10,Wangxy:shuffling:OC11,Zhang:perceptron:ND12,cqli:Breakwxy:ND12}. Some general approaches evaluating security of chaos-based encryption schemes were summarized in \cite{AlvarezLi:Rules:IJBC2006,ShujunLi:ChaosBook2011}.

In \cite{Zhu:hyperchaotic:OC12}, a novel image encryption scheme based on improved hyperchaotic sequences was proposed, where
a pseudo-random number sequence (PRNS), generated by a four-dimensional hyper-chaos system, is used to control the modulation addition and
the bitwise exclusive OR operation. Shortly after the publication of \cite{Zhu:hyperchaotic:OC12}, \textit{Fatih et al.} found that an equivalent secret key of the encryption scheme
can be obtained by a brute-force method when some chosen plain-images and the corresponding cipher-images are available \cite{Fatih:breakhyperchaotic:OC12}. The present paper re-evaluates the
security of the encryption scheme proposed in \cite{Zhu:hyperchaotic:OC12}, and
discovers the following security problems: (1) the scope of the equivalent secret key of the encryption scheme can be narrowed efficiently by
comparing one known plain-image and the corresponding cipher-image; (2) the equivalent secret key can be easily confirmed when two known plain-images and the corresponding cipher-images
are available; (3) encryption results are not sensitive with respect to changes of the plain-images/secret key.

The rest of this paper is organized as follows. Next section briefly introduces the image encryption scheme under study. Section~\ref{sec:cryptanalysis}
reviews the cryptanalysis work proposed by \textit{Fatih et al.} and then present an efficient known-plaintext attack on the image encryption scheme under study
in detail with some experimental results. The last section concludes the paper.

\section{The image encryption scheme under study}

The plaintext of the image encryption scheme under study is a gray scale image. Without loss of generality,
the plain-image can be represented as a one-dimensional
8-bit integer sequence $\bm{P}=\{p(i)\}_{i=1}^L$ by scanning it
in the raster order, where $L$ is the number of pixels of the plain-image, and $L$ is assumed to be a multiple of 4. Correspondingly, the cipher-image is denoted by
$\bm{C}=\{c(i)\}_{i=1}^L$. Then, the proposed image encryption scheme can be described as follows\footnote{For the sake of completeness, some notations in the original paper \cite{Zhu:hyperchaotic:OC12}
are modified under the condition that the encryption scheme is not changed.}:

\begin{itemize}
\item \textit{The secret key}:
initial state $(x(0), y(0), z(0), w(0))$ of the hyperchaotic system proposed in \cite{Niu:hyperchaotic:CNSNS10}, which is given as
\begin{equation}
\label{HyperchaotiSystem}
\begin{cases}
\dot{x}=a(y-x)+yz,\\
\dot{y}=cx-y-xz+w,\\
\dot{z}=xy-bz,\\
\dot{w}=dw-xz,
\end{cases}
\end{equation}
where $(a, b, c, d)=(35, 8/3, 55, 1.3)$.

\item\textit{The initialization procedure}:

(1) In double-precision floating-point arithmetic, solve Eq.~(\ref{HyperchaotiSystem}) with the fourth order Runge-Kutta method with a fixed step
length, $h=0.001$, $N_{0}$ times from the initial condition $(x(0), y(0), z(0), w(0))$ iteratively, where $N_{0}>500$.

(2) Iterate the above quantization process $L/4$ more times and obtain a four-dimensional state sequence
$\{(x(i), y(i), z(i), w(i))\}_{i=1}^{L/4}$.

(3) Generate PRNS $K=\{k(i)\}_{i=1}^L$ as follows: for $l=1, 2, \dots, L/4$,
set $k(4l-3)=F(x(l))$, $k(4l-2)=F(y(l))$, $k(4l-1)=z(x(l))$, and $k(4l)=F(w(l))$,
where
\begin{equation*}
F(x)=\left(\lfloor(\left|G(x)\right|-\lfloor \left|G(x)\right| \rfloor)\times 10^{14}\rfloor\right)\bmod 256,
\label{eq:transform}
\end{equation*}
$G(x)=x\times 10^2-[x\times 10^2]$, and $|x|$, $[x]$ and $\lfloor x \rfloor$ round $x$ to the absolute value of $x$, the nearest integers of $x$ and the nearest integers less than or equal to $x$, respectively. Note that
\begin{equation*}
F(x)=(|x\times 10^{2}-[x\times 10^{2}]|\times 10^{14})\bmod 256
\end{equation*}
since $\lfloor |G(x)| \rfloor)\equiv0$.

\item\textit{The encryption procedure} includes the following two rounds of confusion steps.

(1) \textit{Confusion I:} for $i=2\sim L$, do
\begin{equation}
t(i)=p(i)\oplus k(i-1)\oplus(t(i-1)\dotplus k(i)),
\label{eq:encryptR1}
\end{equation}
where
\begin{equation}
t(1)=p(1)\oplus k(1)\oplus(c(0)\dotplus k(1)),
\label{eq:firstelementR1}
\end{equation}
$c(0)$ is a predefined integer falling within the interval $[1, 255]$.

(2) \textit{Confusion II:} for $i=2 \sim L$, do
\begin{equation}
c(i)=t(i)\oplus k(i-1)\oplus (c(i-1)\dotplus k(i)),
\label{eq:encryptR2}
\end{equation}
where
\begin{equation}
c(1)=t(1)\oplus k(1)\oplus(t(L)\dotplus k(1)).
\label{eq:firstelementR2}
\end{equation}

\item[--] \textit{The decryption procedure} is similar to the encryption procedure except the following points:
(1) \textit{Confusion II} is performed first; (2) the operation on each elements in both of the two confusion steps
is carried out in a reverse order; (3) the variables $t(i)$ and $p(i)$ in Eq.~(\ref{eq:encryptR1}) and the variables $c(i)$ and $t(i)$ in Eq.~(\ref{eq:encryptR2}) are swapped, respectively.
\end{itemize}

\section{Cryptanalysis}
\label{sec:cryptanalysis}

\subsection{Attack proposed by \textit{Fatih et al.}}

To make presentation of this paper more complete, \textit{Fatih et al.}'s attack proposed in \cite{Fatih:breakhyperchaotic:OC12}
is reviewed and commented in this subsection.

Substituting Eq.~(\ref{eq:firstelementR1}) and Eq.~(\ref{eq:encryptR1}) into Eq.~(\ref{eq:firstelementR2}) and Eq.~(\ref{eq:encryptR2}), respectively,
one has
\begin{IEEEeqnarray}{rCl}
c(1)& = & p(1)\oplus k(1)\oplus(c(0)\dotplus k(1))\oplus k(1)\oplus(t(L)\dotplus k(1)) \nonumber\\
    & = & p(1)\oplus(c(0)\dotplus k(1))\oplus(t(L)\dotplus k(1)) \label{eq:condition1}
\end{IEEEeqnarray}
and
\begin{IEEEeqnarray}{rCl}
c(i)& = & p(i)\oplus k(i-1)\oplus(t(i-1)\dotplus k(i))\oplus k(i-1)\nonumber\\
    & &\oplus (c(i-1)\dotplus k(i)) \nonumber\\
    & = & p(i)\oplus(t(i-1)\dotplus k(i))\oplus (c(i-1)\dotplus k(i))\label{eq:condition2}
\end{IEEEeqnarray}
for $i=2\sim L$. The idea of \textit{Fatih et al.}'s attack is to search
$(t(L), k(1))$ and $(t(i-1), k(i))$ and verify them with Eq.~(\ref{eq:condition1}) and Eq.~(\ref{eq:condition2}), respectively,
where $i=2\sim L$. In \cite{Fatih:breakhyperchaotic:OC12}, \textit{Fatih et al.} choose a plain-image of fixed value zero, namely $p(i)\equiv 0$.
Success of \textit{Fatih et al.}'s attack depends on whether the known values of $\alpha$ and $y$ can verify the combination of $\beta$ and $x$ in
\begin{equation}
y=(\alpha\dotplus x)\oplus (\beta \dotplus x),
\label{eq:essentialfunction}
\end{equation}
where $\alpha, \beta, x$ and $y$ are all $8$-bit integers, and $(\alpha\dotplus x)=(\alpha+x)\bmod 2^8$. Referring to \cite{Cqli:breakmodulo:IJBC13}, one can see that it is very difficult to estimate
the required number of known/chosen plain-images assuring the success of \textit{Fatih et al.}'s attack. In addition, the computational complexity of
Fatih \textit{et al.}'s attack is $O(L\cdot 256\cdot 256\cdot 4\cdot 4)=O(2^{20}L)$, which means the attack complexity is high when $L$ is very large.

\subsection{Attack with one known plain-image}

In \cite[Sec.~3.4]{Zhu:hyperchaotic:OC12}, it was claimed that the image encryption scheme under study is robust against known/chosen-plaintext attack. However, we found the encryption scheme can be broken with even only one known plain-image.

\begin{proposition}
Assume that one pair of known plain-image, $\bm{P}=\{p(i)\}_{i=1}^L$, and the corresponding cipher-image, $\bm{C}=\{c(i)\}_{i=1}^L$, are available,
then the unknown sequences $\{t(i)\}_{i=1}^{L}$ and $\{k(i)\}_{i=1}^{L-2}$ are only determined by the values of $k(L-1)$ and $k(L)$.
\label{propo:determinesequence}
\end{proposition}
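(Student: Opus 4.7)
The plan is to reconstruct the unknowns by a single backward sweep, with indices running from $L$ down to $2$, that alternates between the substituted Eq.~(\ref{eq:condition2}) and the original Confusion~I Eq.~(\ref{eq:encryptR1}). The two known values $k(L-1)$ and $k(L)$ will serve as the seeds that drive the sweep, and the rest will fall out by straightforward inversion of the bijective operations involved.

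First I would rewrite the two governing equations in explicit inversion form. Solving Eq.~(\ref{eq:condition2}) for the summand that contains $t(i-1)$ gives
\begin{equation*}
t(i-1)\dotplus k(i)=c(i)\oplus p(i)\oplus\bigl(c(i-1)\dotplus k(i)\bigr),
\end{equation*}
so once $k(i)$ is fixed, $t(i-1)$ is uniquely recovered by subtracting $k(i)$ modulo $256$. Solving Eq.~(\ref{eq:encryptR1}) for $k(i-1)$ gives
\begin{equation*}
k(i-1)=t(i)\oplus p(i)\oplus\bigl(t(i-1)\dotplus k(i)\bigr).
\end{equation*}
Both inversions are single-valued because $\oplus$ and modular addition by a fixed summand are bijections on $\{0,\ldots,255\}$.

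The sweep would then proceed as follows. Starting from the hypothesised $k(L-1)$ and $k(L)$, apply the first inversion at $i=L$ to recover $t(L-1)$, and then evaluate Eq.~(\ref{eq:encryptR1}) in the forward direction at $i=L$ to obtain $t(L)=p(L)\oplus k(L-1)\oplus(t(L-1)\dotplus k(L))$. For each subsequent $i=L-1,L-2,\ldots,2$, the first inversion at index $i$ delivers $t(i-1)$, since the required $k(i)$ was found one round earlier, and the second inversion at index $i$ delivers $k(i-1)$ from $t(i)$, $t(i-1)$ and $k(i)$, all of which are in hand. When the loop terminates at $i=2$, every element of $\{t(i)\}_{i=1}^{L}$ and $\{k(i)\}_{i=1}^{L-2}$ has been written as an explicit function of $(k(L-1),k(L))$ together with the known data $\bm{P}$, $\bm{C}$ and $c(0)$.

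Essentially nothing hard remains: the only point to verify is that the dependencies in the alternating sweep line up, which is immediate because $k(i)$ is always produced one step before it is consumed as the summand for recovering $t(i-1)$. Eq.~(\ref{eq:condition1}) is not required for the proposition itself, but it does furnish an overdetermined consistency check that distinguishes the correct seed $(k(L-1),k(L))$ from spurious candidates, which is presumably what the next subsection exploits to convert the $2^{16}$ guesses on $(k(L-1),k(L))$ into a concrete recovery attack.
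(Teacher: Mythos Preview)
Your proof is correct and follows essentially the same backward-sweep idea as the paper. The only cosmetic difference is the choice of which pair of recursions to invert: the paper works directly with Eq.~(\ref{eq:encryptR2}) (to extract first $t(L)$, then each $k(i-1)$) and Eq.~(\ref{eq:encryptR1}) (to extract each $t(i-1)$), whereas you use the substituted Eq.~(\ref{eq:condition2}) to obtain $t(i-1)$ and Eq.~(\ref{eq:encryptR1}) to obtain $k(i-1)$; since Eq.~(\ref{eq:condition2}) is just Eq.~(\ref{eq:encryptR1}) plugged into Eq.~(\ref{eq:encryptR2}), the two sweeps are equivalent. One trivial remark: $c(0)$ is never actually consumed in your sweep (nor in the paper's), so you need not list it among the known data.
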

\begin{proof}
Given the values of $k(L-1)$ and $k(L)$, from Eq.~(\ref{eq:encryptR2}) one can obtain
\begin{equation}
t(L)=c(L)\oplus k(L-1)\oplus(c(L-1)\dotplus k(L)).
\label{eq:tL}
\end{equation}
Incorporating Eq.~(\ref{eq:encryptR1}) into Eq.~(\ref{eq:encryptR2}), one has
\begin{equation}
t(L-1)=(t(L)\oplus p(L)\oplus k(L-1))\dot{-}k(L),
\label{eq:tL_1}
\end{equation}
where $a\dot{-}b=(a-b+256)\bmod 256$.
Then, one can obtain
\begin{equation}
k(L-2)=c(L-1)\oplus t(L-1)\oplus(c(L-2)\dotplus k(L-1)).
\end{equation}
Similarly, one can obtain
\begin{equation}
\begin{cases}
t(i)=(t(i+1)\oplus p(i+1)\oplus k(i))\dot{-}k(i+1)\\
k(i-1)=c(i)\oplus t(i)\oplus(c(i-1)\dotplus k(i))
\end{cases}
\end{equation}
for $i=L-2\sim 2$, and
\begin{equation*}
t(1)=(t(2)\oplus p(2)\oplus k(1))\dot{-}k(2).
\end{equation*}
Therefore, the proposition is proven.
\end{proof}

From Proposition~\ref{propo:determinesequence}, one can see that the equivalent secret key of the image encryption scheme under study,
$\{t(i)\}_{i=1}^{L}$ and $\{k(i)\}_{i=1}^{L}$,
are only determined by the values of $k(L-1)$ and $k(L)$ when one pair of known-plaintext and the corresponding cipher-text are available.
As $t(L)$ is determined by $k(L-1)$ and $k(L)$ via Eq.~(\ref{eq:tL}),
and $t(1)$ and $k(1)$ are generated by them in the above iteration form, two independent equations in the form of Eq.~(\ref{eq:essentialfunction}), Eq.~(\ref{eq:firstelementR1}) and Eq.~(\ref{eq:firstelementR2}), are available for verification of the search in this attack method. Success of this attack depends on whether a
wrong version of $(k(L-1), k(L))$ can generate the corresponding version of $(t(1), t(L), k(1))$ passing the verification of Eq.~(\ref{eq:firstelementR1}) and Eq.~(\ref{eq:firstelementR2}).
Assume that $t(1)$, $t(L)$ and $k(1)$ satisfy an uniform distribution, the probability of passing verification of Eq.~(\ref{eq:firstelementR1}) and Eq.~(\ref{eq:firstelementR2})
are both $\frac{1}{256}$. Therefore, only a small number of $k(L-1)$ and $k(L)$ can pass the verification. As shown in Sec.~\ref{sec:defect}, $\{k(i)\}_{i=1}^{L}$ and $\{k(i)\oplus 128\}_{i=1}^{L}$ are equivalent for encryption/decryption (excluding the most significant bit plane) of the image encryption scheme under study, they are considered as the same one in this section.
Note that $\{c(i)\}_{i=1}^{L}$, $\{k(i)\}_{i=1}^{L}$ and $L$ all have influence on the verification, the success rate is very hard to be estimated.
To illustrate this problem, the image ``Peppers" of size $512\times 512$, shown in Fig.~\ref{figure:plaintextattack1}a),
is chosen as the known plain-image, the number of possible versions of $\{k(i)\}_{i=1}^{L}$ passing the verification under one hundred
random secret keys are shown in Fig.~\ref{figure:numbersolution}. As for $5\%$ of the one hundred random secret keys, the equivalent secret key can be confirmed definitely. As for more than $70\%$ of them, the scope size of equivalent secret key is less than 6. When $(x(0), y(0), z(0), w(0))=(5, 10, 5, 10)$, $N_0=1000$ and $c(0)=3$ (the key used in \cite[Sec.~3]{Zhu:hyperchaotic:OC12}), one of the possible versions of $\{k(i)\}_{i=1}^{L}$ passing the verification
is used to decrypt the cipher-image shown in Fig.~\ref{figure:plaintextattack1}, and the result is shown in Fig.~\ref{figure:plaintextattack1}d).
It is counted that $60.13\%$ of the pixels of the image shown in Fig.~\ref{figure:plaintextattack1}d) are correct, which
shows that even the wrong version may be used to recover some information of the cipher-image. Therefore, we can conclude that
this attack is very effective. From Proposition~\ref{prop:xor128}, one can see that
$(k(L-1), k(L))=(a, b)$ and $(k(L-1), k(L))=(a, b\oplus 128)$ are equivalent for Eq.~(\ref{eq:tL_1}). Therefore, the computation complexity of this attack can be estimated as $O(256\cdot 128\cdot L\cdot 2\cdot 3)=O(2^{17}L)$.
\begin{figure}[!htb]
\centering
\begin{minipage}[t]{2.5\imagewidth}
\includegraphics[width=2.5\imagewidth]{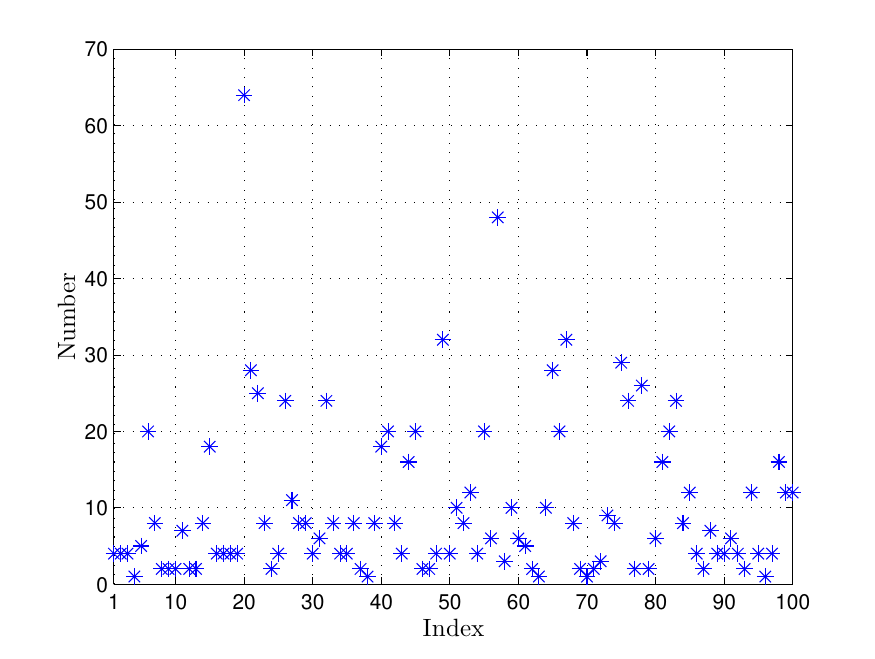}
\end{minipage}
\caption{The number of possible versions of $\{k(i)\}_{i=1}^{L}$ passing the verification under every
set of random secret key.}
\label{figure:numbersolution}
\end{figure}

\begin{figure}[!htb]
\centering
\begin{minipage}[t]{\imagewidth}
\includegraphics[width=\imagewidth]{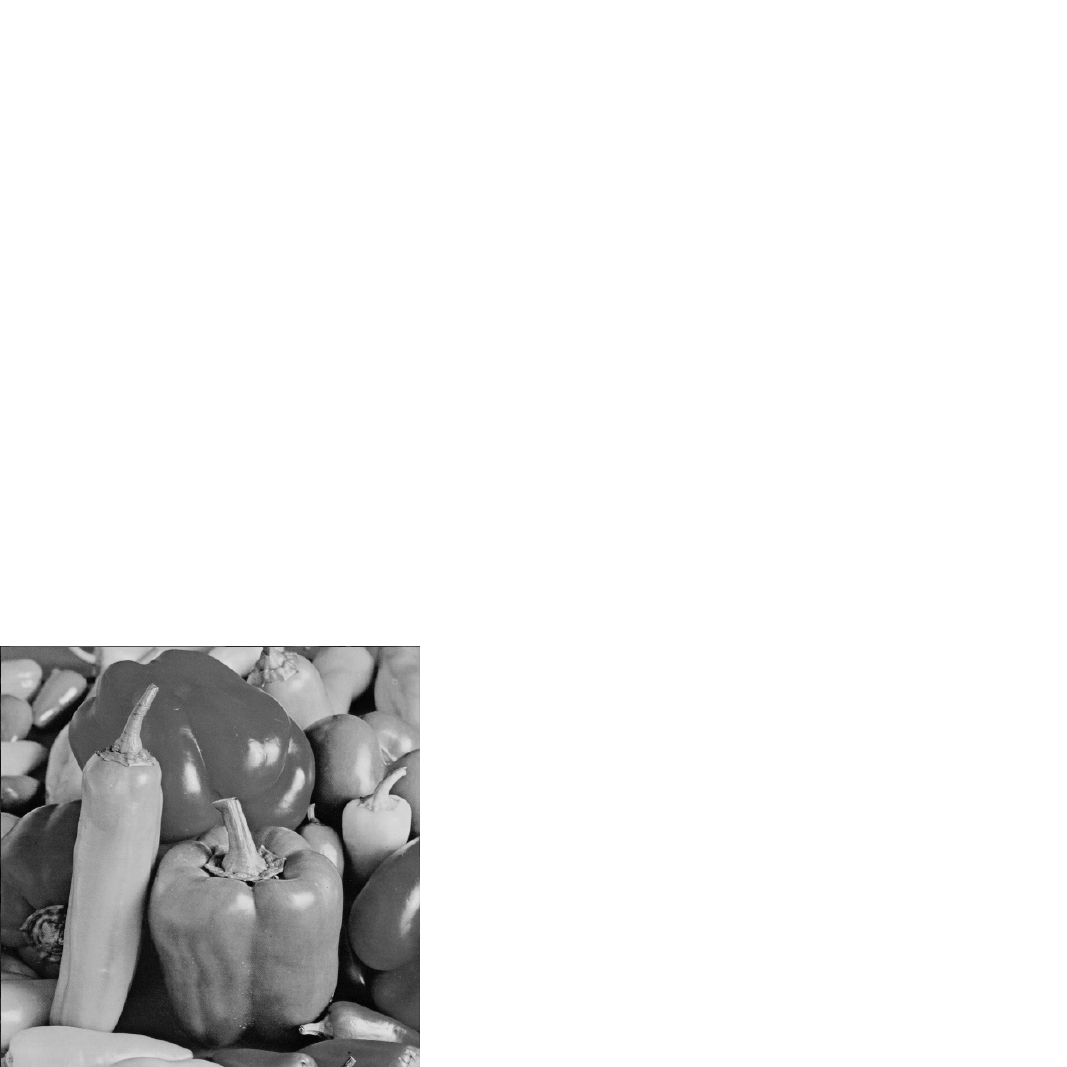}
a)
\end{minipage}
\begin{minipage}[t]{\imagewidth}
\includegraphics[width=\imagewidth]{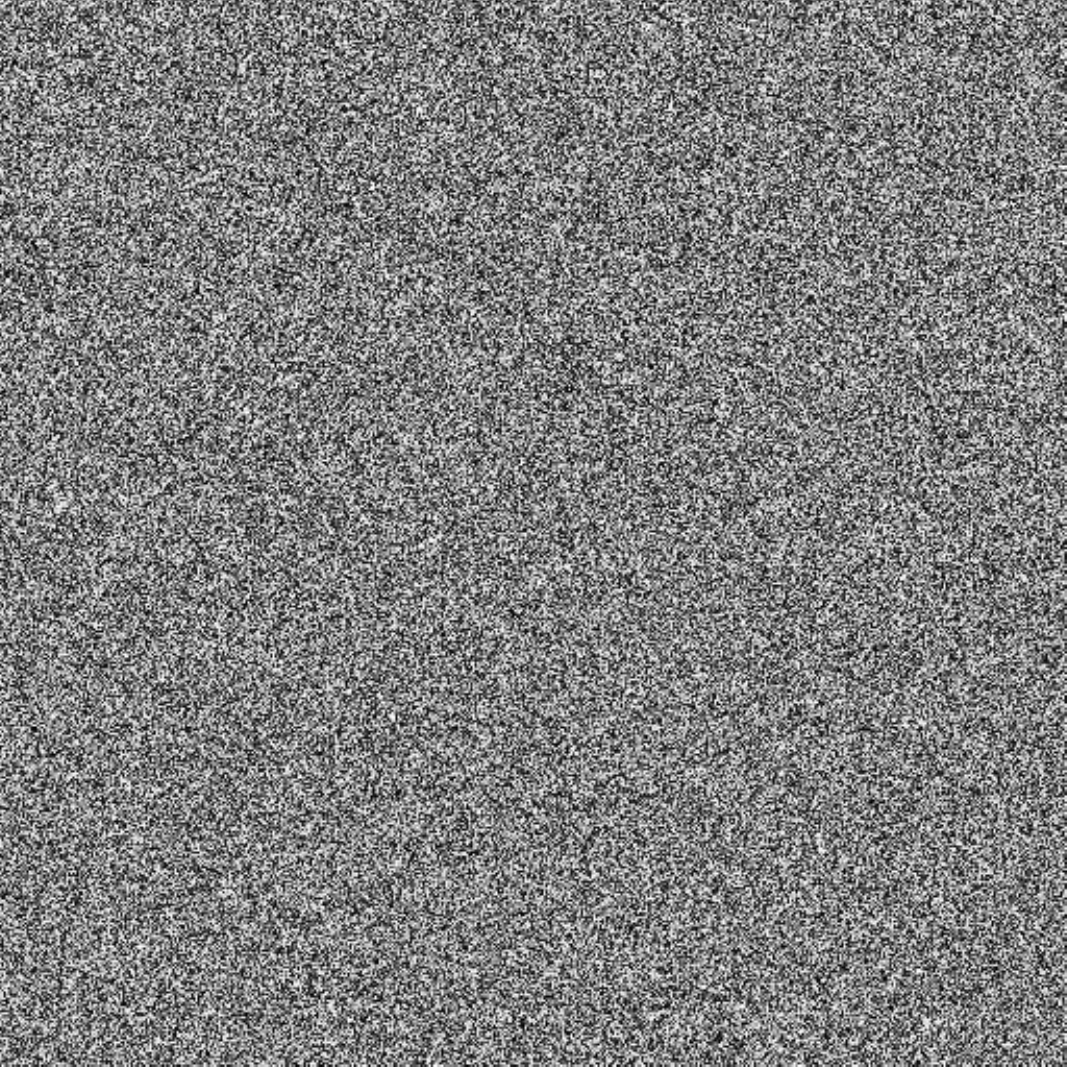}
c)
\end{minipage}
\\
\begin{minipage}[t]{\imagewidth}
\includegraphics[width=\imagewidth]{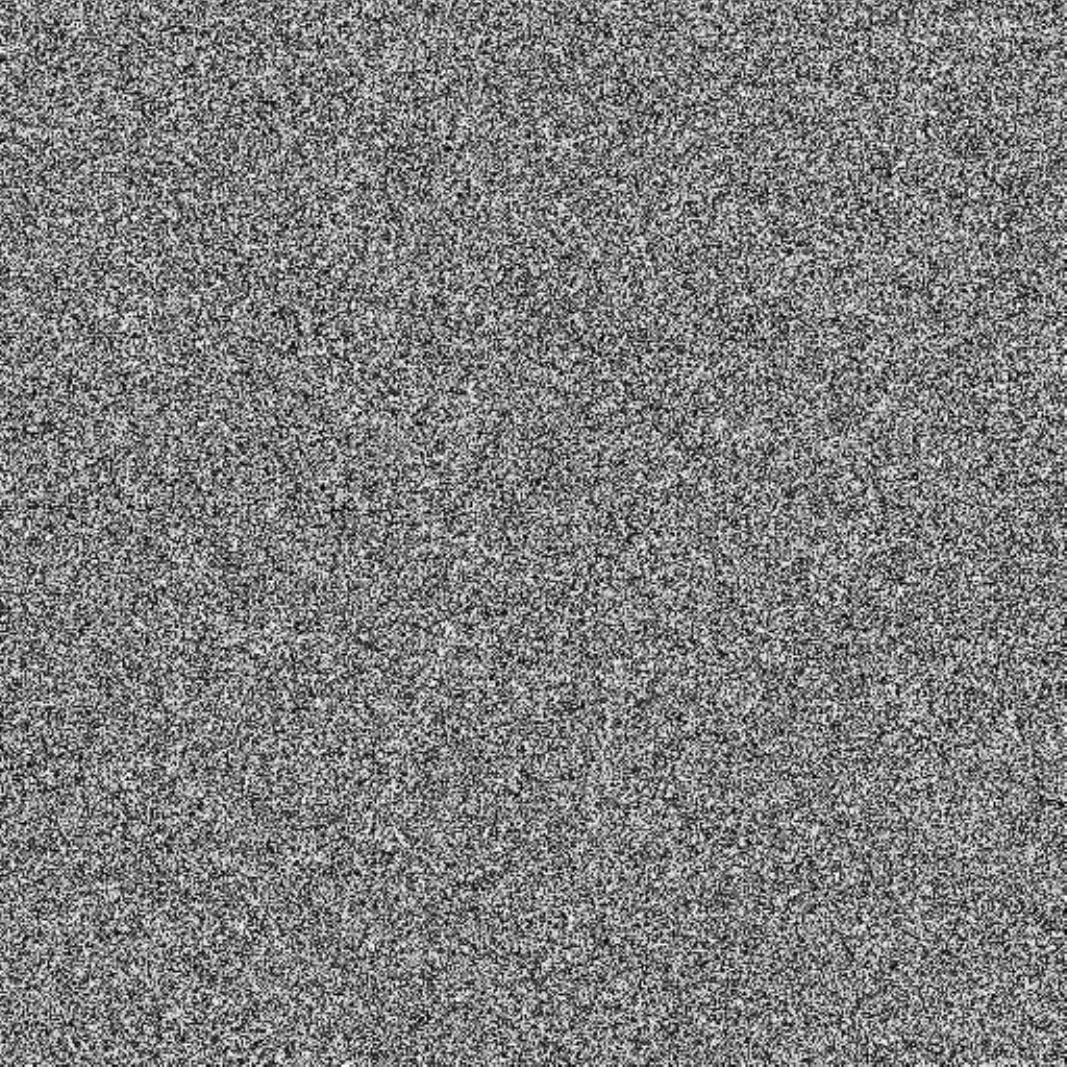}
b)
\end{minipage}
\begin{minipage}[t]{\imagewidth}
\includegraphics[width=\imagewidth]{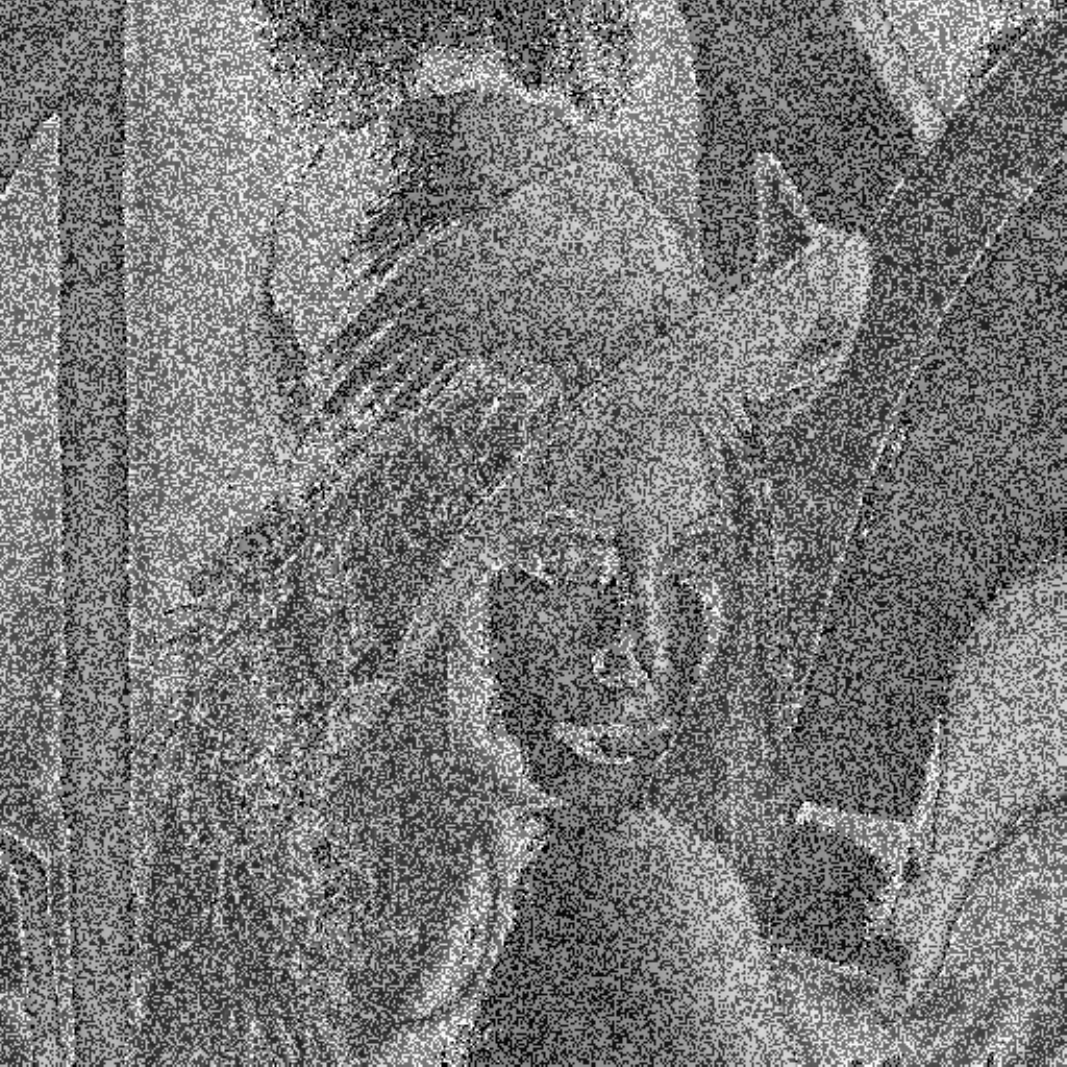}
d)
\end{minipage}
\caption{The known-plaintext attack I: a) known plain-image ``Peppers"; b) the cipher-image of ``Peppers"; c) the cipher-image of a plain-image ``Lenna";
d) decryption result of Fig.~\ref{figure:plaintextattack1}c).}
\label{figure:plaintextattack1}
\end{figure}

\subsection{Attack with two known plain-images}

When two known plain-images, $\bm{P}_1=\{p_1(i)\}_{i=1}^L$ and $\bm{P}_2=\{p_2(i)\}_{i=1}^L$, and the corresponding cipher-images,
$\bm{C}_1=\{c_1(i)\}_{i=1}^L$, $\bm{C}_2=\{c_2(i)\}_{i=1}^L$, are available, coincidence of two versions of $\{k(i)\}_{i=L-2}^{1}$ can be used as
$L-2$ independent conditions to verify the search of $(k(L-1), k(L))$ in the above sub-section. Therefore, the success probability of obtaining the equivalent secret key
can be improved greatly and the attack complexity can be much reduced at the same time.

The detailed approach of the attack can be described as follows.

\begin{itemize}
\item \textit{Step 1)} Set $i=L-1$ and $(k(L-1), k(L))$ with a possible set of values and obtain
\begin{equation*}
t_1(L)=c_1(L)\oplus k(L-1)\oplus(c_1(L-1)\dotplus k(L))
\end{equation*}
and
\begin{equation*}
t_2(L)=c_2(L)\oplus k(L-1)\oplus(c_2(L-1)\dotplus k(L)).
\end{equation*}

\item \textit{Step 2)} Set $i=i-1$. If $i>1$ and
\begin{IEEEeqnarray}{rCl}
c_1(i)\oplus t_1(i)\oplus(c_1(i-1)\dotplus k(i))&= & c_2(i)\oplus t_2(i) \nonumber\\
\oplus(c_2(i-1)\dotplus k(i)),
\label{eq:conditionki}
\end{IEEEeqnarray}
repeat \textit{Step 2)}; otherwise go to \textit{Step 1)},
where
\begin{equation}
\begin{cases}
t_1(i)=(t_1(i+1)\oplus p_1(i+1)\oplus k(i))\dot{-}k(i+1),\\
t_2(i)=(t_2(i+1)\oplus p_2(i+1)\oplus k(i))\dot{-}k(i+1).
\end{cases}
\end{equation}

\item \textit{Step 3)} If $i=1$,
\begin{equation}
c_1(1)=t_1(1)\oplus k(1)\oplus(t_1(L)\dotplus k(1))
\label{eq:condtionkL1}
\end{equation}
or
\begin{equation}
c_2(1)=t_2(1)\oplus k(1)\oplus(t_2(L)\dotplus k(1)),
\label{eq:condtionkL2}
\end{equation}
output the value of $(k(L-1), k(L))$; otherwise go to \textit{Step 1)}.
\end{itemize}

Now, let's analyze the performance of the above attack. Observe Eq.~(\ref{eq:conditionki}), one has
\begin{equation*}
Prob(t)=\prod_{i=L-2}^tProb(i),
\end{equation*}
where $Prob(i)$ denotes the probability of condition~(\ref{eq:conditionki}) being satisfied, and
$t\in\{L-2, L-3, \cdots, 1\}$. Obviously, Eq.~(\ref{eq:conditionki}) can be considered as a function in the form of Eq.~(\ref{eq:essentialfunction}).
Given variable $\alpha, \beta, x, y$ of uniform distribution, the probability that Eq.~(\ref{eq:essentialfunction}) holds is $1/256$. Assume $\{t(i)\}_{i=1}^L$, $\{c(i)\}_{i=1}^L$ and $\{k(i)\}_{i=1}^L$ distribute uniformly, one can get $Prob(t)=(1/256)^{L-1-t}$.
Therefore, one can assure that $\{k(i)\}_{i=1}^{L-1}$, $\{t_1(i)\}_{i=1}^{L}$ and $\{t_2(i)\}_{i=1}^{L}$ can be determined in a very extremely high probability when the variable $i$ in \textit{Step 3)} can reach to $L-5$. Once $t_1(1), t_1(L), t_2(1), t_2(L), k(1)$ are determined, the remaining values of $k(L)$ can be further confirmed with condition~(\ref{eq:condtionkL1}) or condition~(\ref{eq:condtionkL2}). In addition, Eq.~(\ref{eq:firstelementR2}) can also be used for verification. Now, one can conclude that
$\{k(i)\}_{i=1}^{L-1}$ can be determined with an extremely high probability when $L\ge 5$. The computational complexity of this attack is $O(256\cdot 128\cdot 5\cdot (3\cdot 2+7)+L\cdot 2\cdot 3)=O(2^{21}+6L)$,
which is much smaller than that of \textit{Fatih et al.}'s attack.

To verify the above analysis, some experiments were performed. Beside the pair of known plain-image and the corresponding
cipher-image shown in Fig.~\ref{figure:plaintextattack1}, another plain-image ``Babarra" and the encrypted version, shown in Fig.~\ref{figure:plaintextattack}a), and Fig.~\ref{figure:plaintextattack}b), respectively, are used. Then, the obtained equivalent secret key
is used to decrypt the cipher-image shown in Fig.~\ref{figure:plaintextattack1}c) and the recovery result is shown in Fig.~\ref{figure:plaintextattack}c), which is identical with the original version.

\begin{figure}[!htb]
\centering
\begin{minipage}[t]{\imagewidth}
\includegraphics[width=\imagewidth]{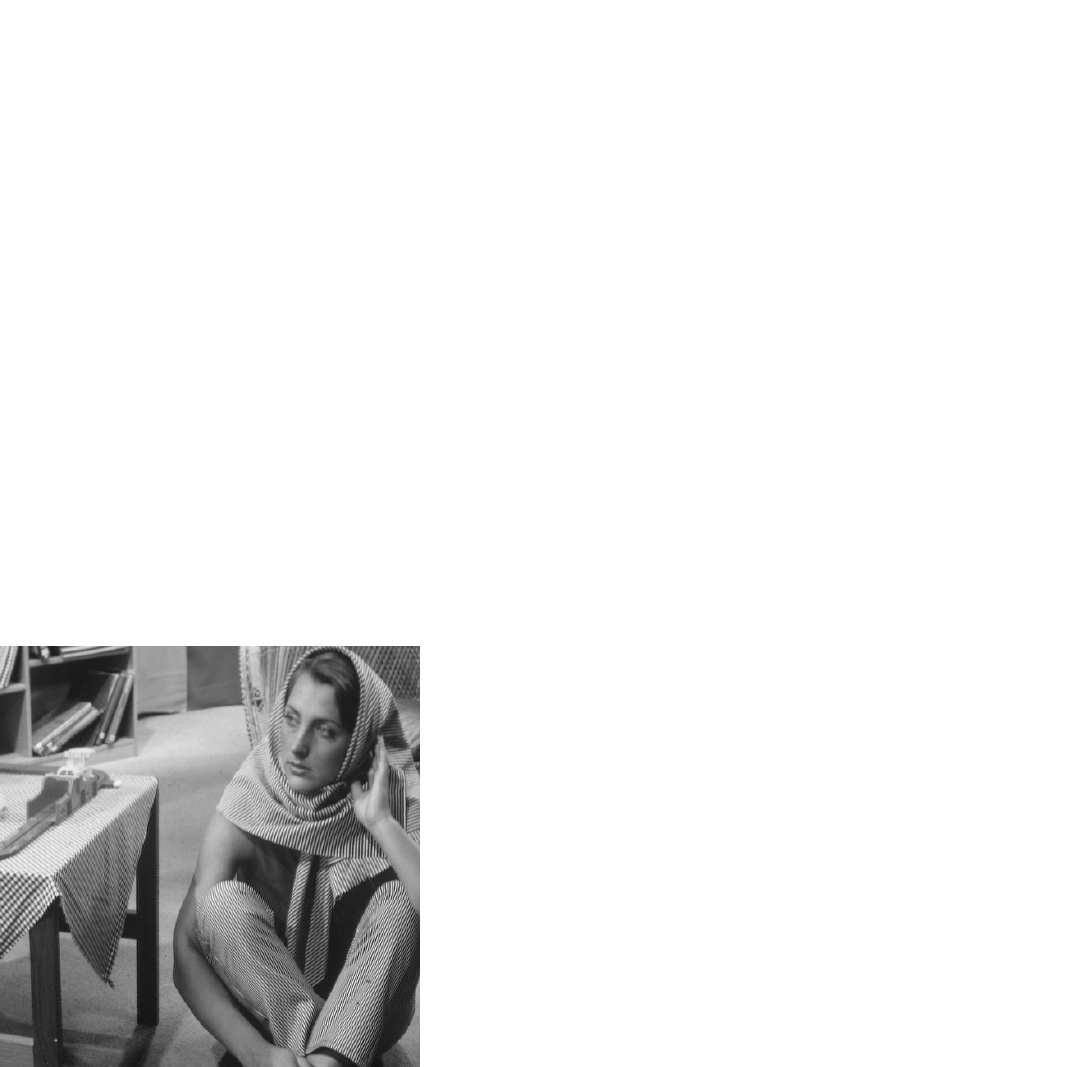}
a)
\end{minipage}
\begin{minipage}[t]{\imagewidth}
\includegraphics[width=\imagewidth]{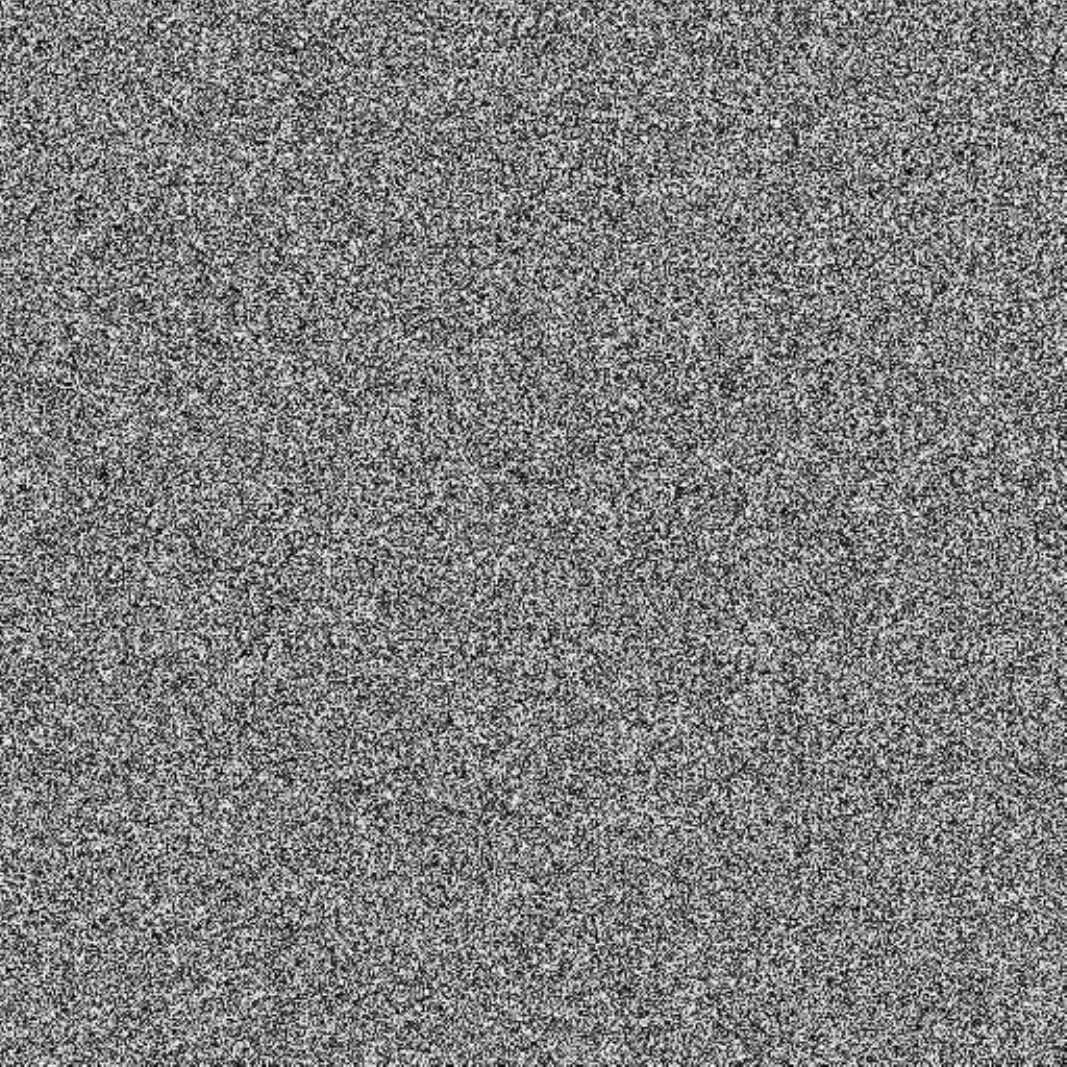}
b)
\end{minipage}
\begin{minipage}[t]{\imagewidth}
\includegraphics[width=\imagewidth]{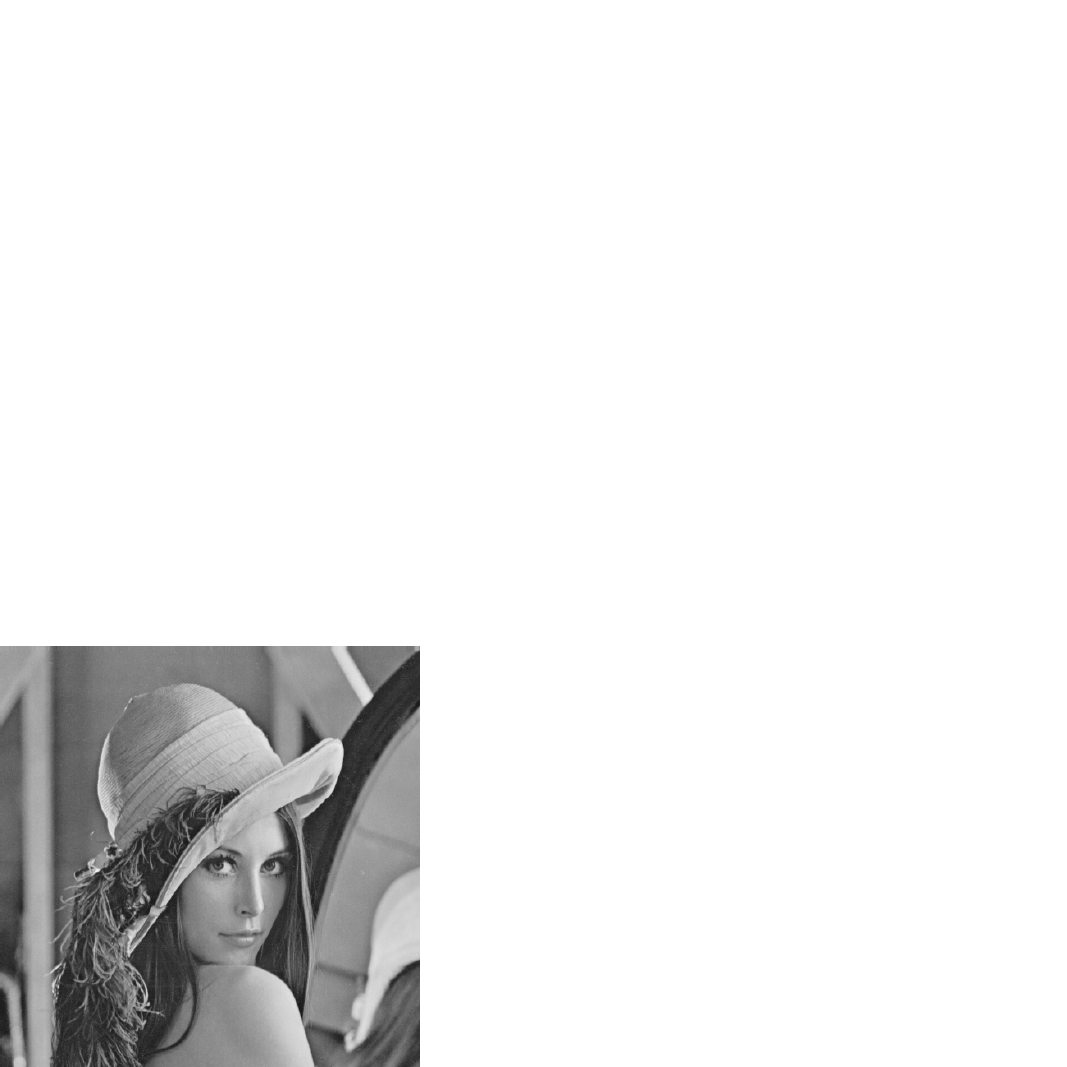}
c)
\end{minipage}
\caption{The known-plaintext attack II: a) the second known plain-image ``Babarra"; b) the cipher-image of plain-image ``Babarra";
c) decryption result of Fig.~\ref{figure:plaintextattack1}c).}
\label{figure:plaintextattack}
\end{figure}

\subsection{Two other security defects}
\label{sec:defect}

In this subsection, two other security defects of the image encryption scheme under study
are discussed.

\begin{itemize}
\item \textit{Low sensitivity with respect to changes of secret key}

In \cite[Sec.~3.3.1]{Zhu:hyperchaotic:OC12}, it was concluded that the image encryption scheme under study is sensitive to changes of secret key from experimental results on some selected secret keys. However, this conclusion lacks a firm ground. Assume there is another secret key generating PRNS $\{k'(i)\}_{i=1}^L$,
where $\{k(i)\oplus k'(i)\}_{i=1}^L=\{0, 128\}$. Let $\{t'(i)\}_{i=1}^L$ and $\{c'(i)\}_{i=1}^L$ denote the corresponding
intermediate sequence and cipher-image, respectively. From Eq.~(\ref{eq:encryptR1}) and Proposition~\ref{prop:xor128}, one has
$t'(1)=t(1)$,
\begin{equation}
t'(i)=
\begin{cases}
t(i)           & \mbox{if } S(i)=0;\\
t(i)\oplus 128 & \mbox{otherwise},
\end{cases}
\end{equation}
for $i=2\sim L$,
where
\begin{equation*}
S(i)=
\begin{cases}
0          & \mbox{if } (k(1)\oplus k'(1)\oplus k(i)\oplus k'(i))=0;\\
1          & \mbox{otherwise},
\end{cases}
\end{equation*}

Then, one further has
\begin{equation*}
c'(i)=
\begin{cases}
c(i)           & \mbox{if } S(L)=1;\\
c(i)\oplus 128 & \mbox{otherwise},
\end{cases}
\end{equation*} for $i=1, 2$.
Based on mathematical deduction, one can obtain
\begin{equation*}
c'(i)=
\begin{cases}
c(i)           & \mbox{if } (S(L)+\sum_{j=2}^{i-1}S(j))\mbox{ is even}, \\
c(i)\oplus 128 & \mbox{otherwise},
\end{cases}
\end{equation*}
for $i=3\sim L$. The above analysis shows that $K'=\{k'(i)\}_{i=1}^L$ is equivalent to $K=\{k(i)\}_{i=1}^L$ with respect to the encryption/decryption procedure
of the least 7 significant bit plane of the plain-image, where $k'(i)\in\{k(i), k(i)\oplus 128\}$. Therefore, there are at least $2^L$ equivalent secret keys for each secret key of the image encryption scheme under study. This serious defect also exists in some other chaotic encryption schemes \cite{Li:chenclass:ISCAS08,AlvarezLi:Rules:IJBC2006,ShujunLi:ChaosBook2011}.

\item \textit{Low sensitivity with respect to change of plain-image}

As well-known in the field of cryptology, sensitivity of encryption results with respect to changes of plaintext is an important property
measuring a secure encryption scheme. This property is especially important for secure image encryption schemes for the following reasons: (1) strong redundancy exists among neighboring pixels of an uncompressed plain-image; (2) a plain-image and its watermarked versions, which generally modify the original image slightly, are often encrypted at the same time. In \cite[Sec.~3.3.2]{Zhu:hyperchaotic:OC12}, it is claimed that the proposed encryption scheme is very sensitive with respect to changes of plain-image. However, the claim is questionable for the following reasons: (1) there is no nonlinear operation, such as S-box, is involved in the whole encryption scheme; 2) there is no operation generating carry bit toward lower level in the whole scheme, so one bit of plain-image can only influence the bits in higher bit planes of the corresponding cipher-image. This defect is very common
for chaos-based encryption schemes \cite[Sec. 2.2]{ShujunLi:ChaosBook2011}.
\end{itemize}

\begin{proposition}
Assume $\alpha$ and $\beta$ are $n$-bit non-negative integers, then
\begin{equation*}
(\alpha\oplus 2^{n-1})\dotplus \beta=(\alpha\dotplus \beta)\oplus 2^{n-1}.
\end{equation*}
\label{prop:xor128}
\end{proposition}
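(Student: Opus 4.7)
The plan is to reduce the identity to the single fact that XOR with the top bit $2^{n-1}$ coincides with addition of $2^{n-1}$ modulo $2^n$, and then invoke associativity of $\dotplus$. This avoids the brute-force case split on the carry behaviour of $\alpha'+\beta$.

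First I would prove the auxiliary claim that for every $n$-bit integer $a$,
\begin{equation*}
a\oplus 2^{n-1} = a\dotplus 2^{n-1}.
\end{equation*}
This is immediate by splitting on the most significant bit of $a$. If $a<2^{n-1}$, the XOR sets the top bit, giving $a+2^{n-1}$, while $a+2^{n-1}<2^n$ so the modular sum is the same. If $a\ge 2^{n-1}$, the XOR clears the top bit, giving $a-2^{n-1}$, while $a+2^{n-1}\in[2^n,\,2^n+2^{n-1})$ so reduction mod $2^n$ subtracts exactly $2^n$, again yielding $a-2^{n-1}$. Both cases match.

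Applying this auxiliary claim twice (once to $\alpha$ and once to $\alpha\dotplus\beta$) and using associativity and commutativity of addition modulo $2^n$, I get
\begin{equation*}
(\alpha\oplus 2^{n-1})\dotplus\beta
   = (\alpha\dotplus 2^{n-1})\dotplus\beta
   = (\alpha\dotplus\beta)\dotplus 2^{n-1}
   = (\alpha\dotplus\beta)\oplus 2^{n-1},
\end{equation*}
which is the desired identity.

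The only real obstacle is establishing the auxiliary claim rigorously, and this is already handled by the two-case argument above; once that is in place, the proposition is a one-line algebraic rearrangement. Note that the auxiliary claim crucially relies on $2^{n-1}$ having all lower bits zero, so that XOR and addition (ignoring overflow, which is swallowed by the modular reduction) act identically on it — the identity would fail for other constants.
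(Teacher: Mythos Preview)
Your proof is correct and follows essentially the same approach as the paper: both first establish the auxiliary identity $a\oplus 2^{n-1}=a\dotplus 2^{n-1}$ via the two-case split on the top bit, and then chain through associativity of $\dotplus$ to obtain the result. The paper's argument is slightly terser (it omits the intermediate step $(\alpha\dotplus 2^{n-1})\dotplus\beta$), but the logical structure is identical.
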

\begin{proof}
First, $\alpha \oplus 2^{n-1}=\alpha \dotplus 2^{n-1}$ can be proven for the following two cases: (1) when
$\alpha\ge2^{n-1}$, one has $\alpha \oplus 2^{n-1}=\alpha-2^{n-1}=\alpha \dotplus 2^{n-1}$; (2) when
$\alpha<2^{n-1}$, one has $\alpha \oplus 2^{n-1}=\alpha+2^{n-1}=\alpha \dotplus 2^{n-1}$. Therefore,
$(\alpha\oplus 2^{n-1})\dotplus \beta=(\alpha\dotplus \beta)\dotplus 2^{n-1}=(\alpha\dotplus \beta)\oplus 2^{n-1}$.
\end{proof}

\section{Conclusion}

This paper re-evaluated the security of a novel image encryption scheme in detail. It was found that the encryption
scheme can be effectively broken with only two known plain-images. Both mathematical proofs
and experimental results were presented to support the proposed attack. In addition, some other security defects
of the encryption scheme were also shown. This paper sets up a good example framework for security
analysis of chaotic cryptosystems.

\section*{Acknowledgement}

This research was supported by the National Natural Science Foundation of China (Nos.~61100216 and 61211130121).

\bibliographystyle{spmpsci}
\bibliography{OC}

\end{document}